\begin{document}
\title{Sparse Tensors and Subdivision Methods for Finding the Zero Set of Polynomial Equations}
\author{Guillaume Moroz}
\institute{Universit\'e de Lorraine, CNRS, Inria,
LORIA, F-54000 Nancy, France, \email{guillaume.moroz@inria.fr}}

\maketitle

\begin{abstract}
  Finding the solutions to a system of multivariate polynomial equations is a fundamental problem in mathematics and computer science. It involves evaluating the polynomials at many points, often chosen from a grid.  In most current methods, such as subdivision, homotopy continuation, or marching cube algorithms, polynomial evaluation is treated as a black box, repeating the process for each point. We propose a new approach that partially evaluates the polynomials, allowing us to efficiently reuse computations across multiple points in a grid. Our method leverages the Compressed Sparse Fiber data structure to efficiently store and process subsets of grid points. We integrated our amortized evaluation scheme into a subdivision algorithm. Experimental results show that our approach is efficient in practice. Notably, our software \texttt{voxelize} can successfully enclose curves defined by two trivariate polynomial equations of degree $100$, a problem that was previously intractable.

  \keywords{Subdivision, sparse tensor, polynomials, root finding}
\end{abstract}

\section{Introduction}
Subdivision algorithms are widely used to enclose the zero set of a function
$F$ (\cite{JKDWbook01v0,kearfott96,Nbook90,plantinga04,Snyder1992} among others). They roughly consist in evaluating
$F$ on boxes created along a subdivision tree.
If the input function is a high degree
polynomial, one of the bottlenecks of those algorithms is
the time required to evaluate $F$.
We propose a new approach that amortizes the evaluation cost over the
boxes created in a subdivision algorithm. It combines on
the one hand partial evaluations of the input polynomial with interval
arithmetics, and on the
other hand sparse tensors \cite{SKiaaa15,CKApl18} to store the boxes created during the
subdivision algorithm.  This approach was implemented in the software
\texttt{voxelize}, and the
source code is available on
gitlab\footnote{\url{https://gitlab.inria.fr/gmoro/voxelize}}. Experimental results show that this software can enclose the zero set of polynomial
systems that were not reachable with state-of-the-art software.

After
giving an overview of our main results in the introduction, we present in
Section~\ref{sec:csf} the Compressed Sparse Fiber data structure and we
show in Section~\ref{sec:evaluationcsf} how it can be used to evaluate
efficiently a polynomial on a subset of a grid of boxes. Then in
Section~\ref{sec:fft}, we show how our new evaluation scheme yields a
quasi-linear time algorithm to compute a discrete Fourier transform.
We show in Section~\ref{sec:subdivision} how to integrate our
evaluation scheme into a subdivision algorithm to enclose the zero set
of a polynomial system. Finally, in Section~\ref{sec:experiments}, we
present the timing results of \texttt{voxelize} on several polynomial
systems, including random polynomial systems (Section~\ref{sec:random}), and
systems coming from applications (Section~\ref{sec:applications}).

\begin{figure}
  \begin{minipage}{0.4\textwidth}
  \includegraphics[width=\textwidth]{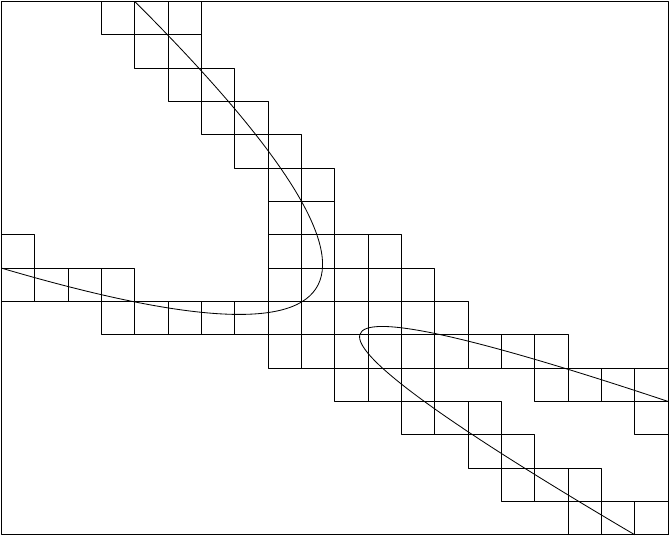}
  \caption{Boxes on the same level of the subdivision tree}
  \label{fig:2D}
\end{minipage}
\hfill
\begin{minipage}{0.4\textwidth}
  \includegraphics[width=\textwidth]{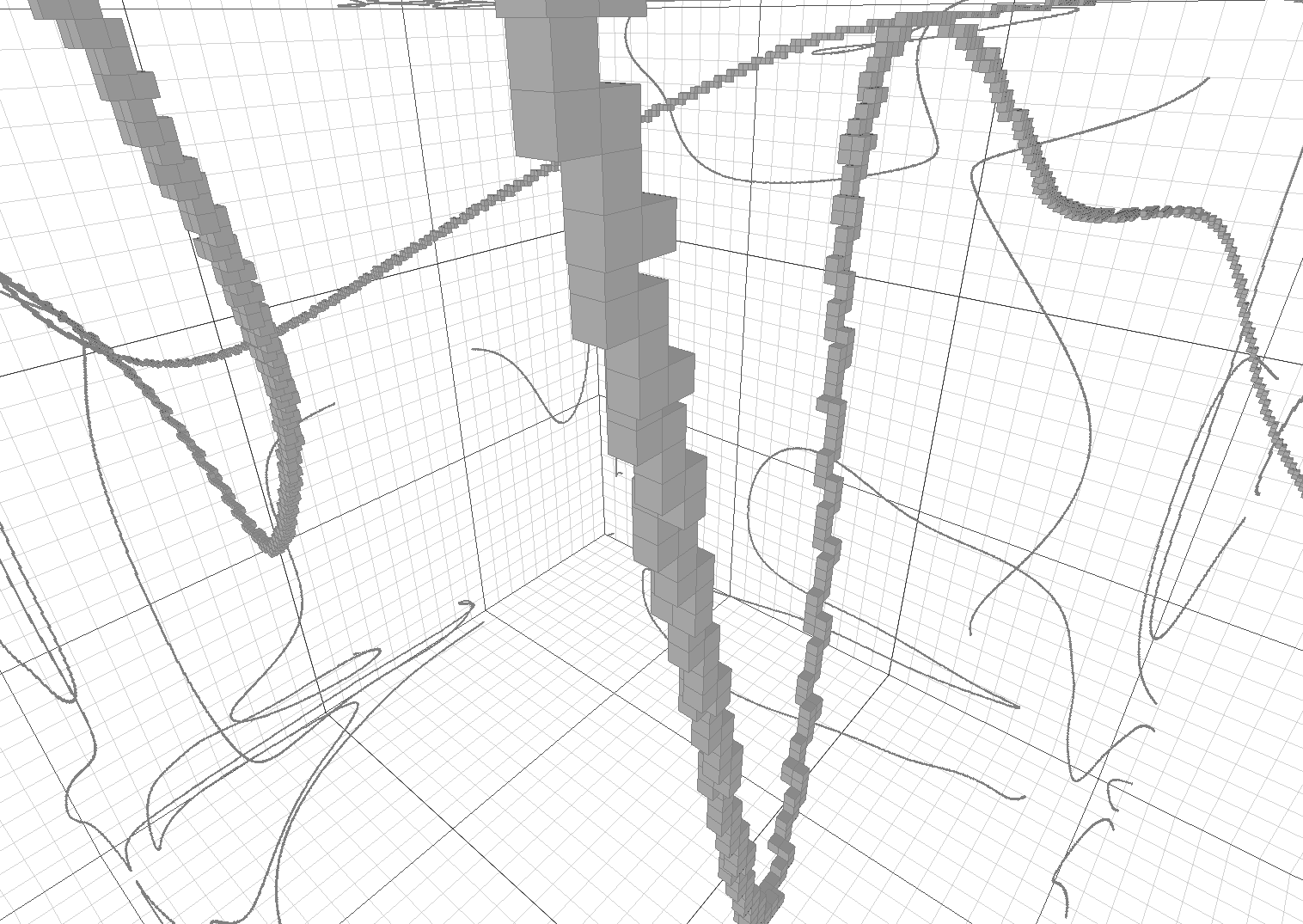}
  \caption{Enclosing of a curve defined by $2$ trivariate polynomials
  of degree $100$}
  \label{fig:3D}
\end{minipage}
\end{figure}


\subsection{Amortized Evaluation on a Grid of Boxes}

The first idea to reduce the evaluation redundancies is to use partial
evaluation. Assume that $F(x_1,x_2)$ is a
bivariate polynomial of degree $d$. Moreover, let
$(I_i)_{0 \leq i <n}$ and $(J_i)_{0\leq j <n}$ be two sequences of real
intervals. Using the H\"orner scheme, evaluating $F$ on a box requires
$O(d^2)$ arithmetic operations, and evaluating $F$ on all the boxes
$I_i \times J_j$ for $0 \leq i,j < n$ requires $O(d^2n^2)$ arithmetic
operations. By reorganizing the operations using partial evaluations,
the number of arithmetic operations can be reduced to
$O(dn(d+n))$. This idea is well known and was used for example to speed up the
multiplication of polynomials \cite{Pjsc94}. It is also currently
implemented in the well-spread library \texttt{NumPy} to evaluate
polynomials in $2$ and $3$ variables \cite{harris2020array}.

More precisely the operations are reordered as follows. For a given $I_i$,
the partial evaluation of $F$ in $I_i$ results in a univariate
polynomial $f_i$ of degree $d$. This step requires $O(d^2)$ arithmetic
operations. Then evaluating $f_i$ on $n$ intervals
requires $O(dn)$ arithmetic operations. Finally, repeating these
operations for all the $n$ intervals $I_i$, this allows us to
evaluate $F$ on all the boxes of the grid with a total
number of arithmetic operations in $O(dn(d+n))$. More generally, for
higher dimensions, this leads to the following result.

\begin{property}[\cite{Pjsc94}]
  \label{clm:dense}
  Let $F$ be a polynomial in $k$ variables and of degree at most $d-1$ in each variable. Let $X_1,
  \ldots,X_k$ be $k$ sets of $n$ real intervals each.
  Then it is possible to evaluate $F$ on all the boxes of $X_1 \times
  \cdots \times X_k$ in $O(kdn\max(n,d)^{k-1})$ arithmetic operations.
\end{property}

In the case where $n>d$, this approach results in a significant speedup
since the amortized number of arithmetic operations to evaluate $F$ on
each box of the grid is $O(kd)$ instead of $O(d^k)$.

\subsection{Amortized Evaluation on a Sparse Subset of a Grid}
For the simple subdivision algorithm mentioned at the beginning of the
introduction, if the
boxes created are never discarded, then each level of the subdivision
tree forms a dense grid of boxes. In this case, the partial evaluation
approach shown in the previous section can be applied directly to reduce
the total number of arithmetic operations required to evaluate $F$ on
each box with interval methods. In the general case though, many boxes
are discarded, and the boxes appearing in a given level of the
subdivision tree form a subset of a grid, as shown in
Figure~\ref{fig:2D}. The boxes created in the subdivision algorithm can
be handled in different orders. Using a breadth-first walk on the
subdivision, the boxes on the same level are a subset of a grid. In this
case, we need to evaluate a polynomial on a sparse subset of a grid.

To evaluate a polynomial on a general set of points, the case of a
univariate polynomial is well understood
\cite{Fstoc72,Hrr08,Mfocs22,IMissac23}. For multivariate polynomials,
there are fewer results that are efficient in practice when the points
are not arranged as a grid. A breakthrough, that was recently improved,
is a quasi-linear algorithm to evaluate a polynomial of degree $d$ in
$k$ variables on $d^k$ points in a finite field
\cite{Ustoc08,KUjc11,VLjc20,BGKMstoc22,BGGKUfocs22}. For multipoint
evaluation with real numbers, the only subquadratic algorithms are for
bivariate polynomials \cite{NZesa04}, or require precomputation more
than quadratic in the number of points \cite{VLjc21,VLjc23}. Finally, a
recent work addresses the case of approximate numerical evaluation
\cite{GHHKSfocs23}. Unfortunately, those approaches are not yet
efficient in practice. Our main result is a practical improvement to
amortize multipoint evaluations in the case were the points or boxes
that we consider are a sparse subset of a grid.

Boxes in a sparse subset of a grid can be gathered
and stored as a sparse tensor in the Compressed Sparse Fiber (CSF)
format \cite{SKiaaa15,CKApl18}. The CSF is a generalization of the
Compressed Row Format used to store the entries of a sparse matrix.
Then, $F$ can be evaluated efficiently on these boxes (see
Section~\ref{sec:evaluationcsf} for more details).
This approach was implemented
in the library \texttt{voxelize}.  Figure~\ref{fig:3D} shows the output
boxes of the software \texttt{voxelize} enclosing an algebraic curve
defined by two polynomial equations of degree $100$, where the
coefficients are randomly drawn from a normal law centered at zero.
Performing the partial evaluation approach on a set of boxes in a CSF
format leads to Theorem~\ref{thm:sparse}.

%

\subsection{Notations}
For a set $E$, we denote by $|E|$ its number of elements. Then, we define the notations for the size of the projection of a
subset $E$ of a grid. In particular, the size of the projection is smaller when the
elements of $E$ are aligned within the grid.


\begin{definition}
  \label{def:size}
  Given a finite set $E \subset \mathbb N^k$, and an integer $i$ between
  $1$ and $k$, we denote by $N_i(E)$ (resp. $\widetilde N_i(E)$) the number of elements in the projection $E$ on
  the first (resp. last) $i$ coordinates, counting repeated projections
  only once. 
\end{definition}

Even though this definition holds for a set of integer tuples, it can be naturally extended for multivariate polynomials.
Indeed, for each monomial, we can associate its vector of exponents.
If $F$ is a polynomial in $k$ variables, for a given integer $i$, we can
define
$N_i(F)$ (resp. $\widetilde N_i(F)$) as the size of the projections of the set of vectors of
exponents of $F$ to their first (resp. last) $i$ coordinates.

For $S$ a set of points or boxes that is a subset of a grid, we can also
extend the definition of $N_i$ by simply
indexing the elements of $S$ by their integer positions in the grid. Letting $S_{ind}$
be the set of integer indices of the boxes of $S$, we can define
$N_i(S)$ by $N_i(S_{ind})$.


\subsection{Main Result}
We can now state our main theorem to evaluate a multivariate
polynomial on a set of boxes that is a sparse subset of a grid of boxes
$G$ that is the Cartesian product of $k$ sets of intervals $X_1 \times
\cdots \times X_k$.

\begin{theorem}
  \label{thm:sparse}
  Let $F$ be a polynomial in $k$ variables, and $S$ be a subset of boxes of
  $G$. It is possible to evaluate $F$ on all the boxes
  of $S$ in $O(\sum_{i=0}^{k-1} \widetilde N_{k-i}(F) N_{i+1}(S))$
  arithmetic operations.
\end{theorem}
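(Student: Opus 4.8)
The plan is to carry out the partial--evaluation scheme of Property~\ref{clm:dense} one variable at a time, but to organise the computation along the Compressed Sparse Fiber tree of $S$ (Section~\ref{sec:csf}) so that a partial evaluation performed for a prefix of coordinates is shared by all boxes of $S$ extending that prefix. Recall that, indexing the boxes of $S$ by their integer positions in the grid $G$ as in Definition~\ref{def:size}, the CSF representation of $S$ is a rooted tree of depth $k$ whose nodes at depth $i$ are in bijection with the distinct projections of $S$ onto its first $i$ coordinates; in particular this tree has exactly $N_i(S)$ nodes at depth $i$, and $N_k(S)$ leaves.

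The algorithm is a single depth-first traversal of this tree that carries a partially evaluated polynomial. At the root we start with $F$, a polynomial in $x_1,\dots,x_k$. At a node $v$ of depth $i$ we hold a polynomial $F_v$ in the variables $x_{i+1},\dots,x_k$ whose coefficients are intervals; the children of $v$ correspond to the distinct intervals $I\in X_{i+1}$ such that the prefix of $v$ followed by $I$ is still a prefix of some box of $S$. For each such child $I$ we compute $F_{vI} := F_v(I,\,x_{i+2},\dots,x_k)$ by substituting the interval $I$ for $x_{i+1}$ with interval arithmetic, and recurse. At a leaf (depth $k$), $F_v$ is a single interval, namely the value of $F$ on the corresponding box of $S$, which we output.

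The cost analysis rests on one invariant and one elementary estimate. \emph{Invariant:} the monomial support of $F_v$, for $v$ at depth $i$, is contained in the projection of the exponent set of $F$ onto its last $k-i$ coordinates, since substituting a value for $x_{i+1}$ can only merge or delete monomials, never create new exponent vectors on the remaining variables; hence $F_v$ has at most $\widetilde N_{k-i}(F)$ monomials, by induction on $i$. \emph{Estimate:} partially evaluating a polynomial with $M$ monomials at an interval value of its first variable costs $O(M)$ arithmetic operations with a Horner--type scheme once the required powers of that interval are available, and those powers can be precomputed within the same bound (this is where the degree enters: one may precompute all powers up to the degree in that variable, or use a sorted--differences scheme, the cost of which is absorbed into the stated complexity). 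Consequently, the work performed while moving from depth $i$ to depth $i+1$ is $O(\widetilde N_{k-i}(F))$ per child of a depth-$i$ node, and since the children of all depth-$i$ nodes are precisely the $N_{i+1}(S)$ nodes of depth $i+1$, this layer costs $O\!\big(\widetilde N_{k-i}(F)\,N_{i+1}(S)\big)$. Summing over $i=0,\dots,k-1$ gives the claimed bound; reading $F$ (cost $O(\widetilde N_k(F))$) and emitting the $N_k(S)$ leaf values are absorbed by the $i=0$ and $i=k-1$ terms respectively.

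I expect the main obstacle to be the bookkeeping in the \emph{Estimate}: making precise that the per-variable partial evaluation stays linear in the current monomial count --- in particular that computing powers of the evaluation intervals does not spoil the bound --- and verifying that the monomial-support \emph{Invariant} is exactly what makes the intermediate polynomial sizes telescope into the quantities $\widetilde N_{k-i}(F)$. The tree-structural part (that the number of shared prefixes at depth $i$ equals $N_i(S)$) is immediate from Definition~\ref{def:size} and the CSF format of Section~\ref{sec:csf}.
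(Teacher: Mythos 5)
Your proposal is correct and takes essentially the same approach as the paper: both analyze the CSF-guided traversal that threads a partially evaluated polynomial through the tree, bound the size of the intermediate polynomial at depth $i$ by $\widetilde N_{k-i}(F)$, and charge the substitution cost to the $N_{i+1}(S)$ nodes at depth $i+1$. The only difference is framing — you account for costs layer by layer, whereas the paper unrolls the same sum via a recursion on Algorithm~\ref{alg:evaluationcsf} — and you correctly state the monomial-support bound as an inclusion (cancellation can only shrink the support), where the paper writes it as an equality.
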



When the set of boxes enclose a variety of dimension $j$,the
projection of $S$ on the first $j$ coordinates is often a dense grid. In
this case, we have the following corollary.

\begin{corollary}
\label{cor:variety}

For $1\leq j \leq k-1$, assume that the projection of $S$ on the first $j$ coordinates is:
\begin{itemize}
  \item[i.] a dense grid, denoted by $X_1\times\cdots\times X_j$
  \item[ii.] $|X_i| > d$ for all $1\leq i \leq j$.
\end{itemize}
Then we can evaluate each box of $S$ in $O(j(d+1)^{k-j+1})$ arithmetic operations on average, instead of $O((d+1)^k)$ operations.
\end{corollary}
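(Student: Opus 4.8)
The plan is to derive the corollary directly from Theorem~\ref{thm:sparse} by estimating each term $\widetilde N_{k-i}(F)\, N_{i+1}(S)$ under the stated hypotheses, and then bounding the resulting sum. First I would recall that, since $F$ has degree at most $d$ in each variable (so the exponent vectors lie in $\{0,\dots,d\}^k$), every projection count satisfies $\widetilde N_m(F) \le (d+1)^m$, and hence $\widetilde N_{k-i}(F) \le (d+1)^{k-i}$ for each $i$. This is the easy, purely combinatorial half of the bound and needs no hypothesis on $S$.

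Next I would analyse $N_{i+1}(S)$, splitting on whether $i+1 \le j$ or $i+1 > j$. In the first regime, the projection of $S$ onto its first $i+1$ coordinates is contained in the projection onto the first $j$ coordinates, which by hypothesis~(i) is the full grid $X_1\times\cdots\times X_j$; projecting further down to $i+1$ coordinates it is exactly $X_1\times\cdots\times X_{i+1}$, so $N_{i+1}(S) = \prod_{\ell=1}^{i+1}|X_\ell|$. In the second regime ($i+1>j$), I would bound $N_{i+1}(S) \le N_k(S) = |S|$, the total number of boxes. The key quantitative input is hypothesis~(ii): since $|X_\ell| > d \ge d+1-1$, i.e. $|X_\ell| \ge d+1$, in the dense regime we get $N_{i+1}(S) \ge (d+1)^{i+1}$, and more importantly the full count $|S|$ is at least $(d+1)^j$ times the number of distinct "fibers" over the dense part, so the per-box amortized cost in the theorem's bound, namely $\bigl(\sum_i \widetilde N_{k-i}(F) N_{i+1}(S)\bigr)/|S|$, can be controlled.

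Carrying this out: for the terms with $i+1>j$, $\widetilde N_{k-i}(F)\,N_{i+1}(S) \le (d+1)^{k-i}\,|S|$, and summing over $i=j,\dots,k-1$ gives at most $|S|\sum_{m=1}^{k-j}(d+1)^m = O\!\bigl(|S|\,(d+1)^{k-j}\bigr)$. For the terms with $i+1\le j$, using $N_{i+1}(S)=\prod_{\ell\le i+1}|X_\ell| \le |S|/\prod_{\ell>i+1,\ \ell\le j}|X_\ell| \le |S|/(d+1)^{\,j-(i+1)}$ (again by hypothesis~(ii)), together with $\widetilde N_{k-i}(F)\le (d+1)^{k-i}$, each such term is at most $|S|\,(d+1)^{k-i-(j-i-1)} = |S|\,(d+1)^{k-j+1}$; since there are at most $j$ such terms, their contribution is $O\!\bigl(j\,|S|\,(d+1)^{k-j+1}\bigr)$. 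Adding the two regimes, the total from Theorem~\ref{thm:sparse} is $O\!\bigl(j\,|S|\,(d+1)^{k-j+1}\bigr)$, which is exactly $O\!\bigl(j(d+1)^{k-j+1}\bigr)$ per box of $S$; comparing against the naive $O((d+1)^k)$ per box (evaluating $F$, with $(d+1)^k$ coefficients, independently on each box) finishes the proof.

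The main obstacle I expect is getting the index bookkeeping in the dense regime right: one has to be careful that "the projection of $S$ on the first $j$ coordinates is a dense grid" lets one factor $N_{i+1}(S)$ as $\prod_{\ell\le i+1}|X_\ell|$ and also lower-bound $|S|=N_k(S)$ by $\prod_{\ell\le j}|X_\ell|$ times the number of distinct extensions beyond coordinate $j$ — i.e. that $|S| \ge \prod_{\ell\le j}|X_\ell| \cdot \widetilde N_{k-j}(S)$ — so that dividing through by $|S|$ genuinely produces the claimed amortized bound. The rest is the routine geometric-series estimate $\sum_m (d+1)^m = O((d+1)^{m_{\max}})$ and is not worth grinding through in detail.
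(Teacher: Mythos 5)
Your argument is correct and is essentially the paper's proof: both split the sum from Theorem~\ref{thm:sparse} at index $j$, bound $\widetilde N_{k-i}(F)\le(d+1)^{k-i}$, use the dense-grid hypothesis together with $|X_\ell|\ge d+1$ to get $\widetilde N_{k-i}(F)N_{i+1}(S)\le(d+1)^{k-j+1}|S|$ for $i<j$, bound the tail $i\ge j$ by a geometric series dominated by $(d+1)^{k-j}|S|$, and divide by $|S|$. (The paper phrases the dense-regime step as iterating $(d+1)N_i(S)\le N_{i+1}(S)$ up to $N_j(S)\le|S|$, whereas you write $N_{i+1}(S)$ as $\prod_{\ell\le i+1}|X_\ell|$ and divide $|S|$ by the missing factors $\prod_{i+1<\ell\le j}|X_\ell|$; these are the same estimate.) One small caution on your closing remark: the inequality $|S|\ge\prod_{\ell\le j}|X_\ell|\cdot\widetilde N_{k-j}(S)$ that you flag as the ``key bookkeeping'' is in fact false in general (e.g.\ $S=\{(0,0),(1,1)\}$ with $j=1$), but it is also not needed --- the argument only uses the trivial $|S|=N_k(S)\ge N_j(S)=\prod_{\ell\le j}|X_\ell|$, which is exactly what your main calculation invokes.
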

\begin{proof}[Corollary~\ref{cor:variety}]
First, if $F$ has degree at most $d$ in each variable, then
$\widetilde N_{k-i}(F)$ is less than $d^{k-i}$ for all non-negative
integers less or equal to $k$.

For $0\leq i < j$, Assumption $i$ implies that $N_{i+1}(S)
= N_i(S) |X_{i+1}|$. Then we deduce with Assumption $ii.$ that
$(d+1) N_i(S) \leq N_{i+1}(S)$. This implies that:
\begin{align*}
  \widetilde N_{k-i}(F) N_{i+1}(S) &\leq (d+1)^{k-i} N_{i+1}(S) \\
                                   &\leq (d+1)^{k+1-j}\\
                                   & N_j(S)\leq (d+1)^{k+1-j} |S|.
\end{align*}

For $i \geq j$ we have $\widetilde
N_{k-i}(F) \leq (d+1)^{k-i}$, such that $\widetilde N_{k-i}(F)
N_{i+1}(S) \leq (d+1)^{k-i} |S|$. Thus, the evaluation of $F$ on all
the boxes of $S$ is in $$O(j (d+1)^{k-j+1}|S| +
\sum_{i=j}^{k-1}(d+1)^{k-i}|S|) = O(j(d+1)^{k-j+1}).$$ In particular, the amortized
cost of evaluating each box is in $O(jd^{k-j+1})$ arithmetic operations
instead of $O(d^k)$ with a direct algorithm.
\end{proof}



%

\section{Evaluating Polynomials with Compressed Sparse Fibers}
\label{sec:evaluation}

\subsection{Sparse Tensor Data Structure}
\label{sec:csf}
The main data structure used in our algorithms is the Compressed Sparse
Fiber, as described in \cite{SKiaaa15,CKApl18}. This data structure is
well suited to store a subset of a grid in high dimension. It can be
seen as a generalization of the classical Compressed Sparse Row data structure used to store
the entries of a sparse matrix as in Figure~\ref{fig:dcsr}.

\begin{figure}
  \centering
  \includegraphics[width=0.6\linewidth]{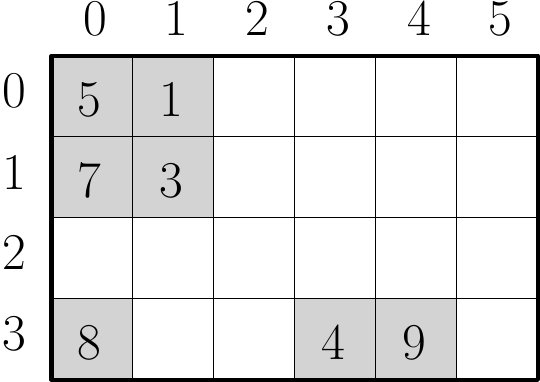}
  \caption{Numbers stored in a sparse matrix}
  \label{fig:dcsr}
\end{figure}

For a subset of a 2D grid, the data structure is a labeled tree that
stores the positions of
the non-empty rows in the children of the root node, and then in each row, the position of the
non-empty entries are stored in the children of the corresponding node (Figure~\ref{fig:graph}). In higher dimension $k$, this idea is applied
recursively. Let $E$ be a subset of points in $\mathbb N^k$. For
$t=(t_1,\ldots,t_{\ell}) \subset \mathbb N^\ell$ a
tuple of size $\ell<k$, we denote by $\pi_t(E)$ the subset of $\mathbb
N$ defined by:
$$\pi_t(E) = \{i \in \mathbb N \mid \exists y_{\ell+2},\ldots y_k \in
  \mathbb N \text{ such that }
(t_1,\ldots,t_\ell,i,y_{\ell+2},\ldots,y_k) \in E\}.$$

Then the Compressed Sparse Fiber (or CSF) data structure associated to $E$
is a labeled tree of depth $k$ defined recursively as
follows. The root of the tree is at depth $0$ and its children are the nodes labeled by
the elements of $\pi_{\emptyset}(E)$, where $\emptyset$ denotes the
empty tuple. Consider now a node $N$ of the tree at depth $1 \leq \ell
< k$. Let $t(N)$ be the tuple of size $\ell$, where the $i$-th
coordinate is the label of the $i$-th node on the path from the root to
$N$. Then the children of $N$ are the nodes labeled by the elements of
$\pi_{t(N)}(E)$. Finally, for a node $N$ at depth $k$, it is possible to add a
leaf that can be labeled with the value of the entry associated to the
tuple $t(N)$. Given a CSF data structure, the corresponding set of tuple
$E$ is unique and is called its \emph{support}.

%

As an example, using the compressed sparse data structure to store the
sparse matrix given in Figure~\ref{fig:dcsr}, we get the tree shown in
Figure~\ref{fig:graph}, and its support is $\{(0,0), (0,1), (1,0),
(1,1), (3,0), (3,3), (3,4) \}$.

\begin{figure}
  \centering
  \includegraphics[width=0.6\linewidth]{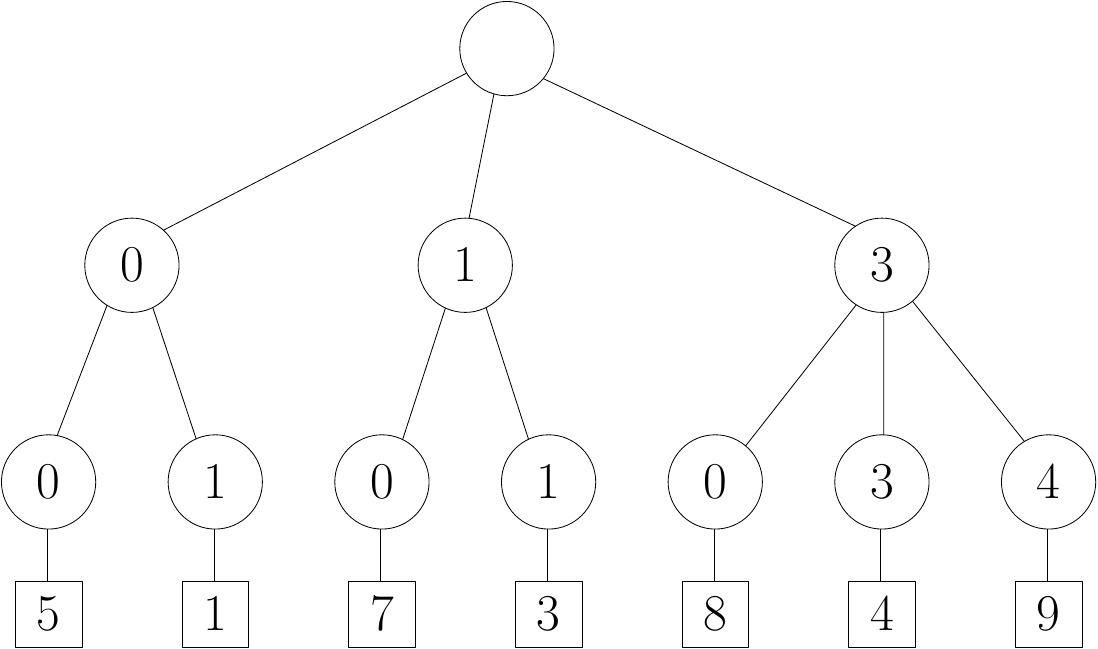}
\caption{Compressed Sparse Fiber associated to the sparse matrix}
  \label{fig:graph}
\end{figure}

%
\begin{remark}
  \label{rem:size}
  Given $T$ a CSF data structure associated to $E$, remark the size of the
  projection on the first $i$ coordinates is the number of nodes of
  depth $i$ in $T$. In particular, we have $N_i(E) = \sum_{t\in\mathbb N^{i-1}} |\pi_t(E)|$.
\end{remark}

\subsubsection{Representing a Multivariate Polynomial.}

A natural application of the Compressed Sparse Fiber data structure is
to encode the monomials of a sparse polynomial. Given a polynomial in
$k$ variables, the exponents of each monomial can be represented as a
$k$-tuple of integers in $\mathbb N^k$, and the coefficients can be
represented as the entries associated to each tuple. Using this
representation, it can be directly encoded in a CSF data structure.
Given a polynomial $F(x_1,\ldots,x_k)$, we denote by $T_F$ the CSF tree associated to
$F$.  By extension of Definition~\ref{def:size}, we define the size
$N_i(F)$ as the size $N_i(T_f)$ of its corresponding CSF tree truncated
to depth $i$.

For example the following polynomial in two variables would be encoded
with the CSF tree in Figure~\ref{fig:graph}:
$$5 + x_2 + 7x_1 + 3x_1x_2 + 8x_1^3 + 4x_1^3x_2^3 + 9 x_1^3x_2^4$$

\subsubsection{Representing a Set of Boxes.}

A sparse subset of a grid of boxes can also be represented with the
Compressed Sparse Fiber data structure, by applying it to the indexes of
the boxes within the grid.  Without loss of generality, consider
a subdivision of the unit box $[0,1]^k$ into $n^k$ smaller boxes, where
each box is a product of intervals of the form $\prod_{i=1}^k [a_i,
b_i]$, where $a_i$ and $b_i$ are real numbers. In the case where the
subdivision is uniform, let $G_n$ be the set of
these $n^k$ boxes. Each cube from $G_n$ can be indexed by a $k$-tuple of
integers in $\mathbb N^k$. In particular, for a sparse subset of $G_n$, we
can associate the set $B$ of the indices of its boxes.
Then, $B$ can be encoded in a CSF data structure. In this case, the tree
we construct won't have leaves since there is no entry associated to
each box.

\subsection{Evaluation Algorithm}
\label{sec:evaluationcsf}
\subsubsection{One Variable.}
A classical way to evaluate a univariate polynomial on a point is the
H\"orner algorithm that we recall in Algorithm~\ref{alg:horner} for the
evaluation of a sparse polynomial on an interval.

\begin{algorithm}
  \DontPrintSemicolon
  \KwInput{An interval $I$ and a polynomial $F(x) = a_0x^{e_0} + \cdots
    + a_\ell x^{e_\ell}$ where:
    \begin{tabular}{rl}
      $e_0 < \cdots < e_l$ & are integers\\
      $a_0,\ldots,a_\ell$ & are real numbers or intervals.
    \end{tabular}
  }
  \KwOutput{
    The interval obtained by evaluating $F$ on $I$ with the
    H\"orner scheme.
  }
  \vspace{1em}
  $J \gets a_l$\;
  \For{$j$ from $\ell-1$ to $0$}{
    $J \gets J \times I^{e_{j+1}-e_j} + a_j$ \;
  }
  \Return J\;
  \caption{H\"orner algorithm}
  \label{alg:horner}
\end{algorithm}

\subsubsection{Several Variables.}
For multivariate polynomials $F$, we can use the H\"orner scheme recursively.
Moreover, if we want to evaluate $F$ on a set of boxes,
Algorithm~\ref{alg:evaluationcsf} generalizes the H\"orner
scheme to the case where $F$ and the boxes are stored in a CSF data
structure. The key idea in Algorithm~\ref{alg:evaluationcsf} is that for
boxes that share the same coordinate, we only evaluate the polynomial
partially on those coordinates. Then we reuse those partially evaluated
polynomials to evaluate the boxes on the remaining coordinates.

\begin{algorithm}
  \DontPrintSemicolon
  \KwInput{%
    \begin{tabular}[t]{rl}
      $F$ & a polynomial in $k$ variables\\
      $T$ & a CSF tree representing the indices of a subset $S$ of boxes
      of a grid\\
          & $X_1 \times\cdots\times X_k$, where $X_i$ is a set of intervals.
    \end{tabular}
  }
  \KwOutput{                               
    A CSF data structure representing the evaluation of the polynomial
    represented by $F$ on all the boxes in $S$.
  }
  \vspace{1em}
  \SetKwFunction{EvaluationCSF}{EvaluationCSF}
  \SetKwProg{Function}{Function}{:}{}
  \Function{\EvaluationCSF{$F$,$T$}}{
    $X_1 \gets$ the list of intervals of the first coordinate in the grid $G$\;
    $L \gets$ empty list \;
    \For{i in $\pi_{\emptyset}(T)$}{
      $I \gets$ the interval of index $i$ in $X_1$\;
      $F_I \gets F(I, x_2,\ldots,x_n)$\;
      \If{F is univariate}{
        Append $F_I$ to $L$\;
      }
      \Else{
      $T_i \gets$ the subtree of $T$ rooted at the node at depth $1$ with label $i$ \;
      $L_i \gets \EvaluationCSF(F_I,T_i)$\;
      Append $L_i$ to $L$\;
      }
    }
    \Return $L$\;
  }
  \caption{Evaluation on a set of boxes}
  \label{alg:evaluationcsf}
\end{algorithm}

The advantage of using the approach in Algorithm~\ref{alg:evaluationcsf}
is that it allows us to amortize the cost of the evaluation
when several boxes have the same projection. In the following,
we will prove that the complexity of
Algorithm~\ref{alg:evaluationcsf} is in $O(\sum_{i=0}^{k-1} \widetilde N_{k-i}(F)
N_{i+1}(S))$ arithmetic operations, which will prove
Theorem~\ref{thm:sparse}.

\begin{proof}[Theorem~\ref{thm:sparse}]
  Since Algorithm~\ref{alg:evaluationcsf} is recursive, we will prove
  its complexity by recurrence. Algorithm~\ref{alg:evaluationcsf} is a loop over the nodes of
  the root of $T$. In particular, this loop will be called $N_1(S)$. In each loop, the dominating complexities are in
  line~6 and~11. In line~6, the complexity of evaluating partially $F$
  in one variable $x_1$ is $\widetilde N_{k}(F)$. Thus, the total
  complexity carried by line~6 is in $O(\widetilde N_k(F)N_1(S))$. And if $F$ is
  univariate, the complexity of Algorithm~\ref{alg:evaluationcsf} is in $O(\widetilde N_1(F) N_1(S))$.

  Then, if $F$ is a polynomial in $k$ variables with $k>1$, the number
  of operations is again carried by lines~6 and~11. Let $S_I$ be the set of
  boxes represented by the tree $T_I$. By recurrence the number of
  operations in line~11 is in $$O\left(\sum_{i=0}^{k-2}
  \widetilde N_{k-1-i}(F_I) N_{i+1}(S_I)\right).$$ In particular, remark that
  $\widetilde N_{k-1-i}(F_I) = \widetilde N_{k-1-i}(F)$. And using
  Remark~\ref{rem:size}, the sum of
  the $N_{i+1}(S_I)$ on all the intervals $I$ children of the root of
  $T$ is equal to $N_{i+2}(S)$. Thus, the complexity of
  Algorithm~\ref{alg:evaluationcsf} carried by line~11 is
  $O\left(\sum_{i=0}^{k-2} \widetilde
  N_{k-1-i}(F)N_{i+2}(S)\right)$. By changing the index of the sum, this
  complexity becomes $$O\left(\sum_{i=1}^{k-1} \widetilde
  N_{k-i}(F)N_{i+1}(S)\right).$$ Since the
  complexity carried by line~6 is $O(\widetilde
  N_k(F)N_1(S))$, this concludes the proof.

\end{proof}


\section{Applications}
\subsection{The Fast Fourier Transform Revisited}
\label{sec:fft}

Given a vector $u$ of $d+1$ complex numbers $u_0, \ldots, u_d$, its discrete
Fourier Transform is the vector $v$ of $d+1$ complex numbers $v_0,\ldots,v_d$
such that:
\begin{equation}
v_k = \sum_{j=0}^d u_j e^{-i2\pi \frac k {d+1}  j}
\label{eq:fft}
\end{equation}

The fast Fourier Transform algorithm returns the vector
$v$ using $O(d\log d)$ arithmetic operations. If we reinterpret
Equation~\eqref{eq:fft} as the evaluation of a multivariate polynomial
on a set of points stored with a CSF tree data structure, we can use
Algorithm~\ref{alg:evaluationcsf} to compute the discrete Fourier
transform in $O(d\log d)$ arithmetic operations.

Without restriction of generality, assume that there exists an integer
$k$ such that $d+1=2^k$ is a power of two. Let $F$ be the polynomial in
$k$ variables defined by:
$$F = \sum_{(i_1,\ldots,i_{k}) \in \{0,1\}^k}
u_{{}_{i_1+\cdots+i_{k}2^{k-1}}} x_1^{i_1}\cdots x_{k}^{i_{k}}$$

Moreover, let $w$ be the $(d+1)$-th root of unity $e^{-i2\pi/(d+1)}$.
For $1\leq j\leq k$, let $X_j = \{1, w^{2^{k-j}}\}$, and let $G$ be the grid of points $g_{i_1,\ldots,i_{k}}$ in $\mathbb
C^k$ for $(i_1,\ldots,i_{k}) \in \{0,1\}^k$, defined by:
$$g_{i_1,\ldots,i_{k}} = (w^{i_12^{k-1}}, \ldots, w^{i_{k}}) \in
X_1\times\cdots\times X_k$$

Then, using the notations of Equations~\eqref{eq:fft}, for an integer $j
= i_12^{k-1} + \cdots + i_{k}$ we have $v_j =
F(g_{i_1,\ldots,i_{k}})$.  The polynomial $F$ has a degree at most $1$
in each variable and the set of points on which $F$ is evaluated is a
Cartesian product $X_1\times \cdots \times X_k$ where $X_j$ has size $2$
for all $1 \leq j \leq k$. Then, using Claim~\ref{clm:dense}, this
evaluation can be done using $O(k2^k)$, that is $O(d\log d)$ arithmetic
operations.

\subsection{Subdivision Algorithm}
\label{sec:subdivision}
A classical approach to find the zero locus of a set of a polynomial equation is to
use a subdivision algorithm. Given a polynomial equation $F$ and a
box $B$, assume that we have two criteria $C_0(F,B)$ and $C_1(F,B)$ such
that:
\begin{itemize}
  \item if $C_0(F,B)$ is true, then $F$ doesn't vanish in $B$
  \item if $C_1(F,B)$ is true, then $F$ vanishes in $B$
\end{itemize}
                                                                
The idea of a subdivision algorithm is to start with a set of boxes,
and to bisect them recursively until the criterion $C_0$ is true, or $C_1$ is
true and the size is smaller than a given threshold. Recall that the
grid $G_n$ is the set of $n^k$ boxes obtained by subdividing uniformly $[0,1]^k$
in $n$ boxes in all the directions. Given a box $B$
from the grid $G_n$, if we bisect it uniformly in $2$ in all the
directions, we end up with a set of $2^k$ boxes, all of them included in
$G_{2n}$. In particular, if we bisect a set of boxes in $G_{n}$, we end
up with a set of boxes in $G_{2n}$. Moreover, if the criteria $C_0$ and
$C_1$ are based on polynomial evaluations, we can use
Algorithm~\ref{alg:evaluationcsf} to amortize the evaluation. This leads
to Algorithm~\ref{alg:subdivision}, that computes a set of boxes that
enclose the zero-set of a polynomial equation. If we want to
compute the zero set of a system of polynomial equations and
inequalities, Algorithm~\ref{alg:subdivision} can be used unchanged, and
the criteria $C_1$ and $C_0$ can be easily adapted to detect if a
system of equations has solutions or not in a given box. To ensure that
Algorithm~\ref{alg:subdivision} terminates, it is necessary that for
boxes small enough, either criterion $C_0$ or $C_1$ succeed.

%

\begin{algorithm}
  \DontPrintSemicolon
  \KwInput{%
    \begin{tabular}[t]{rl}
      $F$ & a multivariate polynomial\\
      $\varepsilon$ & a positive threshold real number
    \end{tabular}
  }
  \KwOutput{                               
    A CSF data structure representing the boxes of size at most
    $\varepsilon$, such that $F$ vanishes in all the boxes, and
    doesn't vanish outside the boxes.
  }
  $S \gets \{ [0,1]^k \}$ \;
  $R \gets \{ \}$ \;
  size $\gets 1$ \;
  
  \While{$S$ is not empty}{
    $S \gets$ set of boxes $B$ in $S$ not satisfying $C_0(F,B)$ \;
    \If{size $< \varepsilon$}{
      $R \gets$ $R$ union the set of boxes $B$ in $S$ satisfying $C_1(F,B)$ \;
      $S \gets$ set of boxes $B$ in $S$ not satisfying $C_1(F,B)$ \;
    }
    $S \gets$ set of boxes bisected from the boxes in $S$ \;
    size $\gets$ size$/2$ \;
  }
  \Return $R$ \;

  \caption{Simple subdivision algorithm to enclose the zero locus of a
  polynomial equation}
  \label{alg:subdivision}
\end{algorithm}

\subsubsection{Criteria for Exclusion and Inclusion}

\paragraph{Exclusion Criterion.}
A simple exclusion criterion $C_0(F,B)$ consists in evaluating $F$ on
$B$ using interval arithmetic.  Interval arithmetic is the
generalization of standard arithmetic operations to the case where
numbers are replaced by intervals. If $[a,b]$ and $[c,d]$ are two
intervals, the result of $[a,b] + [c,d]$ is the interval $[a+c, b+d]$.
If $F$ is a polynomial in $k$ variables and $B$ is a product of $k$
intervals, we denote by $\Box F(B)$ the interval returned when $F$ is evaluated on $B$ using
interval arithmetic. The main property of interval arithmetic is that
the interval $\Box F(B)$ satisfies $\{F(x) \mid x \in B\} \subset \Box
F(B)$. In particular, if $0 \notin \Box F(B)$, then $F$ does not vanish
in $B$. Thus, we can define $C_0(F, B)$ as the predicate $0 \notin \Box
F(B)$.

The exclusion criterion can also be computed using other schemes to
evaluate $F$ on $B$, such as the Taylor form, which can reduce the
overesetimation near the zeros of $F$ \cite[\S 3.5]{Hhal15}.

\begin{definition}[Taylor Form~{\cite[Definition
3.3]{RRca84},\cite{HKYissac21}}]
  \label{def:taylor}
  If $c$ is the middle point of $B$, for a given integer $m$, the Taylor form of order
$m$ of the polynomial $F$ in $k$ variables is defined by:
$$T_m(F,x) = F(c) + \cdots + \frac{F^{(m-1)}(c)}{(m-1)!} (x-c)^{m-1} +
\frac{\Box F^{(m)}(B)}{m!} (x-c)^m$$
where $x=(x_1,\ldots,x_k)$ is a tuple of symbolic variables.

\end{definition}

This evaluation scheme satisfies the property $\{F(x) \mid x\in B\}
\subset T_m(F,B)$, such that $0\notin T_m(F,B)$ implies that $F$ does
not vanish in $B$. In the case of a system of several equations, we can
simply test if any of the input polynomial does not contain $0$.

\paragraph{Inclusion Criterion.}
For the inclusion criterion $C_1(F,B)$ to detect if $F$ vanishes in $B$,
a simple test consists in evaluating $F$ on all the vertices of $B$ and
returning \texttt{true} if two of them have different signs, and
\texttt{false} if all the signs are the same. Remark that the set of all
the vertices of all the boxes are a subset of a grid, and thus we can
also use Algorithm~\ref{alg:evaluationcsf} to amortize the cost of
their evaluation.

The inclusion criterion $C_1(F,B)$ can also be based on the Taylor form if we computed it
with order $m$, where $m$ is an integer greater or equal to $2$. Let
$\ell(x)$ be the linear
part of $T_m(F,x)$. Let $V_{min}$ be a vertex of $B$ that minimizes $\ell$ and
$V_{max}$ one that maximizes $\ell$. Then we can reduce the evaluation
of $F$ to the vertices $V_{min}$ and $V_{max}$. We can also use the
Taylor form to evaluate lower and upper bounds of the values of $F$ at
$V_{min}$ and $V_{max}$. In this case, our predicate will return
\texttt{true} if the lower bound on $F(V_{max})$ is positive and
the upper bound on $F(V_{min})$ is negative.

If $F$ is a vector of multiple polynomials, and if we want to test if they vanish
simultaneously inside a box, we can use a criterion $C_1$ derived from the Newton
Interval criterion~\cite{Nbook90,Gc07}.
First, when the number of input equations
$F_1=0, \ldots, F_k=0$ is equal to the number of variables, the Newton
Interval criterion can be seen as a fixed-point theorem. Letting $S$ be the $k
\times k$ matrix defined by

$$S_{ij} = \begin{cases}
  \frac{F_i(x_1,\ldots,x_{j-1},x_{j},c_{j+1},\ldots,c_k) -
  F_i(x_1,\ldots,x_{j-1},c_{j},c_{j+1},\ldots,c_k)}{x_{j}-c_{j}} &\text{if } x_j \neq c_j\\
        \frac{dF}{dx_j}(x_1,\ldots,x_{j-1},c_j,\ldots,c_k) & \text{if } x_j=c_j
      \end{cases}
  ,$$
and $c$ be the center of the box $B$, we define the formula $N(x) =
c-S(x)^{-1} F(c)$. If $N(B) \subset B$, the fixed-point theorem
ensures that there exists a point $x_0$ in $B$ such that $N(x_0)=x_0$,
which is equivalent to $F_1(x_0)=0,\ldots,F_k(x_0)=0$. Otherwise, when
the number of polynomial equations is less than the number of variables,
we can intersect the box with the linear space spanned by the gradient
vectors of the input polynomial at the center of the box $B$. Then we
can use the Newton Interval criterion on the resulting system that has
as many equations as variables.

\section{Experiments}
\label{sec:experiments}

Algorithm~\ref{alg:evaluationcsf} and~\ref{alg:subdivision} have been
implemented in \texttt{C++} in the software \texttt{voxelize}. This
software can take as input a list of polynomial equations and polynomial
inequalities, and it returns a list of boxes enclosing the set of points
where the input system has solutions. Furthermore, if the input is a single polynomial
equation, then it is guaranteed to vanish in each box returned by
\texttt{voxelize} that are larger than a threshold given by the user.
The software can be used as a standalone program, taking one file per
polynomial, or it can be used through a python interface.

The criterion $C_0$ used to exclude boxes is based on the Taylor form
evaluation scheme described in Definition~\ref{def:taylor}. The
criterion $C_1$ is implemented in the case where the input is a single
polynomial equation, and it follows the approach based on the Taylor
form detailed at the end of Section~\ref{sec:subdivision}. In the case
of multiple input polynomial equations, the subdivision process stops
when the boxes are smaller than a threshold given by the user.

\subsection{Random Polynomials}
\label{sec:random}
In Table~\ref{tab:random}, we show the time to enclose the zero-set of
polynomial equations in $k$ variables where $k$ is either $2,3$ or $4$.
In each case, we consider three cases:
  a hypersurface defined by one equation, a curve defined by $k-1$
  equations, points defined by $k$ equations. And for each case, we
  generated random polynomials of total degree either $20$ or $100$,
  except for $k=4$ where \texttt{voxelize} could not handle polynomials
  in $4$ variables and total degree $100$. The random coefficients are
  floating-point numbers with double precision uniformly sampled between
  $-10$ and $10$.

The computation have been done on a laptop with a $1.9$GHz CPU and $16$G
of RAM. The tests have been done with one thread, for easier comparison
with other single-thread programs. Note that \texttt{voxelize} is also
implemented with the multi-thread library \texttt{openmp} and it can
distribute the computations on several threads. Up to our knowledge,
\texttt{voxelize} is the only available software that can handle the
systems with polynomials of degree $100$ in $3$ variables presented
in Table~\ref{tab:random}.

\begin{table}
  \centering
  \caption{Timing in seconds for computing enclosing boxes in the cube
  $[-2,2]^k$. For points and curves, the subdivsion process
  stopped for boxes smaller than ${2^{-8}\simeq0.004}$. For
hypersurfaces, the subdivision process stopped when either the criterion
$C_0$ or $C_1$ was satisfied on all the boxes, and the boxes had a size
smaller than $2^{-5} \simeq 0.03$.}
  \label{tab:random}
  \begin{threeparttable}
  \begin{tabular}{|l|r@{.}l@{\quad}r@{.}l|r@{.}l@{\quad}r@{}l|r@{}l|}
  \hline
  \textbf{dimension $k$}& \multicolumn{4}{c|}{\textbf{2D}} &
  \multicolumn{4}{c|}{\textbf{3D}} & \multicolumn{2}{c|}{\textbf{4D}} \\ \hline
  \textbf{degree $d$}& \multicolumn{2}{c@{\quad}}{\textbf{20}} &
  \multicolumn{2}{c|}{\textbf{100}} &
  \multicolumn{2}{c@{\quad}}{\textbf{20}} &
  \multicolumn{2}{c|}{\textbf{100}} & \multicolumn{2}{c|}{\textbf{20}} \\ \hline
  points\tnote{2}\hfill ($k$ equations) & 0&006 & 0&32 & 0&5 & 273& & 56& \\
  curves\tnote{2}\hfill ($k-1$ equations) & 0&062 & 0&31 & 1&3 & 270& & 91& \\
  hypersurfaces ($1$ equation) & 0&062 & 0&31 & 1&1 & 412& & 373& \\ \hline
\end{tabular}
  \begin{tablenotes}
    \item[2] Only the exclusion criterion was implemented
      for this case, and not the inclusion criterion.
  \end{tablenotes}
  \end{threeparttable}
\end{table}

\subsection{Polynomials Coming from Applications}
\label{sec:applications}
We also used the software on two polynomial systems coming from robotics
and automatic applications. In these cases, we compared our software
with the state-of-the-art subdivision software
\texttt{ibex}. The \texttt{ibex} software is a general subdivision software including a specific feature
called contractors~\cite{CJai09}. A contractor is an operator that takes as input a
function $F$ and a box $B$, and that returns a smaller box $B'$ such
that the intersection of $B'$ with the zero set $Z$ of $F$ is the same as
the intersection of $B$ with $Z$.

\paragraph{Robotics.}
In robotics, a classical problem is to compute the parallel singularities of a
robot. That is the set of control parameters around which the robot can
be assembled in two nearby configurations. In particular, the
following set of equations defines the singularities in the orientation
space of the $3$-\underline{PP}PS manipulator~\cite{CGCCMAjmr12}. The
orientation space is modeled with $4$ quaternion variables, commonly used to
parametrize the rotation matrices in $3D$. The sum of the squares of the
quaternion variables is constrained to be $1$.

$$(R) \begin{cases}
0\quad=&\begin{array}[t]{@{}l@{}}
-6\,{Q_{{2}}}^{2}Q_{{3}}Q_{{1}}+6\,Q_{{2}}{Q_{{3}}}^{2}Q_{{4}}+3\,
\sqrt {3}{Q_{{2}}}^{2}Q_{{3}}Q_{{4}}
\\
\hspace{3ex} {} -6\,Q_{{2}}{Q_{{1}}}^{2}Q_{{4
}}+6\,Q_{{1}}{Q_{{4}}}^{2}Q_{{3
}}\\
\hspace{3ex} {} -3\,\sqrt {3}Q_{{2}}Q_{{1}}{Q_{{4}}}^{2}+3\,\sqrt {3}Q_{{2}}{Q_{{3}}}^{2}Q_{{1}}
\\
\hspace{3ex} {} -3\,\sqrt {3}Q_{{3}}{Q_{{1}}}^{2}Q_{{4}}+\sqrt {3}{Q_{{2}}}^{3}Q_{{1}}
\\
\hspace{3ex} {} -\sqrt {3}Q_{{2}}{Q_{{1}}}^{3}+Q_{{4}}\sqrt {3}{Q_{{3}}}^{3}-Q
_{{3}}\sqrt {3}{Q_{{4}}}^{3}
\end{array}\\

1 \quad=&Q_1^2+Q_2^2+Q_3^2+Q_4^2
\end{cases}$$

\paragraph{Automatic.}

In control theory, a common problem is to decide if it is possible to
add a controller to a dynamic system such that it becomes stable. In
some case, this problem can be reduced to decide if a polynomial system
does not vanish on complex numbers of modulus less than one. For
example, the following system in $3$ complex variables was communicated by Thomas
Cluzeau and Alban Quadrat. If it has no solution where $z_1, z_2$ and
$z_3$ have a modulus less than $1$, then it is possible to design a
stable controller for the corresponding dynamic system. 

$$(A) \begin{cases}
  &|z_1| \leq 1\\
  &|z_2| \leq 1\\
  &|z_3| \leq 1\\
  &0=z_{1} z_{2}^{2}-z_{1} z_{3}-2\\

  &0=\begin{array}[t]{@{}l@{}}
    12 z_{2}^{3} z_{3}^{3}-2 z_{1}^{2} z_{2} z_{3}^{2}+z_{2}^{3} z_{3}^{2} -2 z_{2}^{2} z_{3}^{3}-12 z_{2} z_{3}^{4}+2 z_{1}^{2} z_{3}^{2}-z_{2} z_{3}^{3}\\
    -2 z_{1} z_{2} z_{3}-7 z_{2}^{3}-10 z_{2}^{2} z_{3}+14 z_{1} z_{3}-8 z_{2}^{2}+9 z_{2} z_{3}+12 z_{3}^{2}+30 z_{3}+2
     \end{array}\\
  &0=\begin{array}[t]{@{}l@{}}
      z_{1}^{3} z_{3}^{3}+z_{1} z_{2} z_{3}^{4}-z_{1}^{3} z_{3}^{2}+z_{1}
  z_{3}^{4}-12 z_{2}^{2} z_{3}^{3}-6 z_{1}^{2} z_{2} z_{3}+3 z_{1}^{2}
  z_{3}^{2}-z_{1} z_{2} z_{3}^{2}\\
  -z_{2}^{2} z_{3}^{2}-10 z_{2}
  z_{3}^{3}-7 z_{1}^{2} z_{3}-12 z_{1} z_{2} z_{3}-2 z_{1}
  z_{3}^{2}-z_{2} z_{3}^{2}+2 z_{3}^{3}-z_{1} z_{2}\\
  -9 z_{1} z_{3}+7
  z_{2}^{2}+10 z_{2} z_{3}-z_{1}+15 z_{2}+8 z_{3}+8
  \end{array}\\

  &0=\begin{array}[t]{@{}l@{}}
  z_{1}^{3} z_{2} z_{3}^{2}-z_{1}^{3} z_{3}^{2}+z_{1}
  z_{3}^{4}+z_{1}^{2} z_{2} z_{3}-12 z_{2} z_{3}^{3}-7 z_{1}^{2}
  z_{3}-z_{1} z_{2} z_{3}-z_{1} z_{3}^{2}\\
  -z_{2} z_{3}^{2}+2 z_{3}^{3}-11
  z_{1} z_{3}-z_{1}+7 z_{2}+10 z_{3}+8
\end{array}\\
\end{cases}$$

By using the change of variable $z_j = x_j + i y_j$, we get $8$
polynomial equations in $6$ variables, with the additional inequalities
$x_i^2+y_1^2 \leq 1$.

\paragraph{Experiences.}

We used \texttt{voxelize} and \texttt{ibex} on those two system of
polynomial equations and inequalities. The timings and the number of
boxes returned for the two software
are presented in Table~\ref{tab:applications}.

\begin{table}
  \centering
  \caption{Subdivision solvers used to enclose the zero-set of the
  systems $(R)$ and $(A)$. For the system $(R)$, the subdivision process
was stopped when boxes were smaller than $2^{-4} \simeq 0.06$, both in
\texttt{ibexsolve} and \texttt{voxelize}.}
  \label{tab:applications}
  \begin{tabular}{|l|c|c|c|c|}
  \hline
  \multicolumn{1}{|c|}{\textbf{Software}} &
  \multicolumn{2}{c|}{\textbf{Robotics $(R)$}} &
  \multicolumn{2}{c|}{\textbf{Automatic $(A)$}} \\ \hline
  \textbf{} & \textbf{Time} & \textbf{Number of boxes} & \textbf{Time} & \textbf{Number of boxes} \\ \hline
  \texttt{ibexsolve} & 103s & 29871 & 2.5s & 0 \\ \hline
  \texttt{voxelize} & 0.1s & 7228 & 1.2s & 0 \\ \hline
\end{tabular}\\[1em]
\end{table}

We can see that both solvers could detect that the system $(A)$ has no
complex solutions of moduli less than $1$. In both cases,
\texttt{voxelize} was faster than \texttt{ibexsolve}, and significantly
faster for
the system $(R)$. This shows that the amortized evaluation scheme based
on the CSF data structure is efficient not only in theory, but also in
practice. On the other hand, \texttt{ibexsolve} and \texttt{voxelize} solve the system $(A)$ with a
time within the same order of magnitude, despite the fact that
\texttt{ibexsolve} does not used amortized evaluations. This might be
due to the fact that the contractors used by \texttt{ibexsolve} work well for this system. Remark
that it could be possible to combine contractors and amortized
evaluation scheme. The main issue is that after applying a contractor,
the boxes are not anymore aligned on a grid. This could be solved by
snapping the boxes to expanded boxes from a refined grid after applying the
contractors.

\begin{credits}
\subsubsection{\ackname}
The author wishes to thank Luc Jaulin, Thomas Cluzeau and Alban Quadrat
for their insightful remarks and examples discussed in this article.
\end{credits}

\bibliographystyle{splncs04}
\bibliography{bib-algcurves-utf8}

%
%
%
%
%
%
%
%
%

\end{document}